\newcommand{\D}[1]{\text{d}#1}
\newcommand{\sign}{\text{sign\,}}
\def\Xint#1{\mathchoice
   {\XXint\displaystyle\textstyle{#1}}%
   {\XXint\textstyle\scriptstyle{#1}}%
   {\XXint\scriptstyle\scriptscriptstyle{#1}}%
   {\XXint\scriptscriptstyle\scriptscriptstyle{#1}}%
   \!\int}
\def\XXint#1#2#3{{\setbox0=\hbox{$#1{#2#3}{\int}$}
     \vcenter{\hbox{$#2#3$}}\kern-.5\wd0}}
\def\dashint{\Xint-}
\newtheorem{property}{Lemma}
\def\doi{http://dx.doi.org/}
\begin{document}


\title{Quantum computing critical exponents}


\author{Henrik Dreyer}
\affiliation{Rudolf Peierls Centre for Theoretical Physics, Clarendon Laboratory, Oxford OX1 3PU, UK}
\affiliation{Cambridge Quantum Computing Ltd.
9a Bridge Street, CB2 1UB, Cambridge,
United Kingdom}
\author{Mircea Bejan}
\affiliation{Rudolf Peierls Centre for Theoretical Physics, Clarendon Laboratory, Oxford OX1 3PU, UK}
\author{Etienne Granet}
\affiliation{Rudolf Peierls Centre for Theoretical Physics, Clarendon Laboratory, Oxford OX1 3PU, UK}

\date{\today}

\begin{abstract}
We show that the Variational Quantum-Classical Simulation algorithm admits a finite circuit depth scaling collapse when targeting the critical point of the transverse field Ising chain.  The order parameter only collapses on one side of the transition due to a slowdown of the quantum algorithm when crossing the phase transition. In order to assess performance of the quantum algorithm and compute correlations in a system of up to 752 qubits, we use techniques from integrability to derive closed-form analytical expressions for expectation values with respect to the output of the quantum circuit. We also reduce a conjecture made by Ho and Hsieh \cite{hoetal} about the exact preparation of the transverse field Ising ground state to a system of equations.
\end{abstract}

\maketitle


\section{\label{sec:level1}Introduction}
Impressive advances have recently led to the realisation of the first noisy intermediate scale quantum (NISQ) computers \cite{preskill,arute, zhong} and quantum simulators \cite{blatt, zhang, islam, islam2, greiner, bloch, Bernien}. One promising application of NISQ devices is the simulation of quantum many-body systems \cite{feynman, cirac,Aidelsburger,choi, Aidelsburger2, choi2, islam3}. Near-term quantum computers will be limited by the number of qubits and the number of gates that can be executed with high fidelity, while analog simulations have to be executed within a typically short coherence time. These restrictions make it challenging to map out phase diagrams of strongly correlated materials, in particular in the vicinity of quantum critical points. At these phase transitions, the correlation length diverges and representing the system of interest with a finite device necessarily comes with a loss of accuracy. Instead of pushing computational resources towards the thermodynamic limit, classical methods typically make use of data produced by smaller scale computations and combine those data points in an informed manner, as it is done in finite size
\cite{fisherbarber,brezin,cardybook,ising_collapse}
or finite bond dimension scaling collapses \cite{scaling_collapse_01}.

Similarly, one may ask if data produced by a NISQ quantum computer can be used to predict the location and universality class of a critical point with an accuracy that goes beyond the machine specifications.

In this work, we provide an example where this is indeed possible. More precisely, we show that the Variational Quantum Classical Simulation (VQCS) \cite{hoetal} algorithm admits a scaling collapse when targeting the critical point of the transverse field Ising model (TFIM). Instead of finite size or finite entanglement, the depth of the quantum circuit plays the role of the relevant perturbation away from criticality. To classically benchmark the performance of the quantum algorithm, we adapt techniques from integrability to the quantum computation setting. Specifically, the quantum circuit can be mapped to a sequence of quenches in the TFIM. We show how to describe the state after a series of such quenches and derive a closed-form analytical expression for expectation values of local Gaussian observables (like the energy) in a circuit of arbitrary depth and width. This allows us to optimize the parameters of the quantum circuit for a number of qubits that is inaccessible to other classical simulation techniques but within reach of a NISQ computer.

Importantly, assessing the physics of the optimal state requires the evaluation of expectation values that are non-Gaussian or non-local. Our key technical contribution is a framework in which \emph{arbitrary} observables can be evaluated classically in polynomial time, provided the circuit is Gaussian.  Using this new framework, we compute the order parameter of the optimal circuit. We find that finite-circuit-depth data of the order parameter collapses on one side of the phase transition, in the phase that contains the initial state of the quantum computer.
The framework we develop can also be used as a point of departure for quantum circuit design.

\begin{figure}[ht]
\begin{center}
 \includegraphics[scale=0.13]{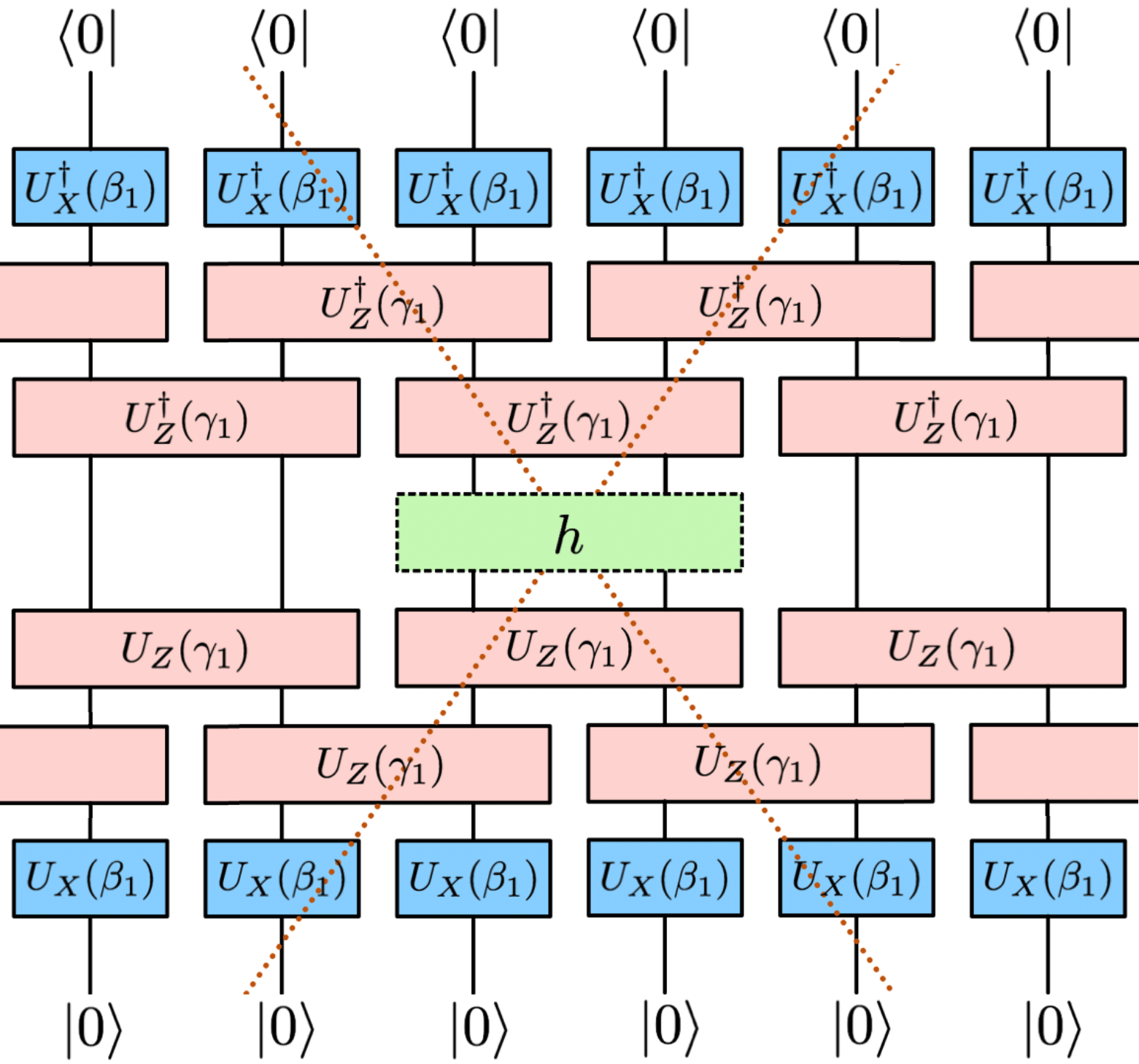}
 \caption{The quantum circuit~(\ref{eq_algo}) for $p=1$. The expectation value of the energy density (green) involves only those qubits that are within the lightcone (dotted). Here, $U_X(\beta) = \exp(i \beta X)$, $U_{ZZ}(\gamma) = \exp(i \gamma Z \otimes Z)$. A similar circuit provides the gradient.}
\label {fig_light_cone}
\end{center}
\end {figure}

\section{Model, algorithm and method}
\subsection{The VQCS algorithm}
We study the VQCS algorithm that was put forward in \cite{hoetal} as an adaptation of the Quantum Approximate Optimization Algorithm (QAOA) \cite{farhi, farhi2} to quantum simulators. Generally, the strategy is to split a target Hamiltonian acting on $L$ qubits as 
\begin{align}
H_T(h) &= H_1 + h H_2,
\end{align}
where the ground state of $H_1$, $\ket{\psi_1}$, should be easy to initialize on a quantum computer or simulator. The VQCS ansatz consists in choosing a circuit depth $2p$ as well as $2p$ variational parameters $(\gamma_1, \beta_1, \dots \gamma_p, \beta_p)$ and writing
\begin{align}
\label{eq_algo}
\ket{\psi(\pmb{\gamma},\pmb{\beta})} = e^{-i \beta_p H_1} e^{-i \gamma_p H_2} \dots e^{-i \beta_1 H_1}e^{-i \gamma_1 H_2} \ket{\psi_1}.
\end{align}
We take the cost function to be the energy density
\begin{align}
\label{eq_cost_function}
F_L(h; \pmb{\gamma},\pmb{\beta}) = \frac{1}{L} \braket{\psi(\pmb{\gamma},\pmb{\beta}) | H_T(h) |\psi(\pmb{\gamma},\pmb{\beta}) },
\end{align}
where we have made explicit $L$, the number of qubits that $H_T(h)$ acts on. The VQCS algorithm proceeds by measuring the cost function on the quantum computer or simulator and feeding the input to an optimization routine running on a classical computer. Once the optimal value
\begin{align}
F_L(h; p) &= \min_{\pmb{\gamma},\pmb{\beta}} F_L(h; \pmb{\gamma},\pmb{\beta}) 
\end{align}
is found, the corresponding quantum states can be prepared at will using the optimal parameters $(\pmb{\gamma}_\text{opt}, \pmb{\beta}_\text{opt})$ and observables of interest can be measured.

\subsection{The transverse field Ising model}
In this paper, we choose as the target Hamiltonian the TFIM
\begin{align}
\label{eq_model}
H_1 &= - \sum_{i=1}^{L} Z_i Z_{i+1} \\
H_2 &= - \sum_{i=1}^{L} X_i,\label{hh}
\end{align}
where $X = \begin{pmatrix}
0 & 1 \\
1 &  0
\end{pmatrix}$ and $Z = \begin{pmatrix}
1 & 0 \\
0 & -1  
\end{pmatrix}$ and where periodic boundary conditions are imposed. To be precise, $H_1$ has two ground states and we pick $\ket{\psi_1} = \ket{0 \dots 0}$ with $Z\ket{0} = +\ket{0}$ to initialise the algorithm. The fact that $H_1$ and $H_2$ consist of local commuting terms has two consequences: i) the evolution~(\ref{eq_algo}) can be implemented on a digital quantum computer without trotterization and ii) we can simulate the algorithm directly in the thermodynamic limit. More precisely,
\begin{align}\label{eq_tdlimit}
F_L(h; \pmb{\gamma},\pmb{\beta}) &= F_{L'}(h; \pmb{\gamma},\pmb{\beta}) 
\end{align}
for $L,L' \geq 4p$. This is due to the fact that there is an exact finite light cone for local operators, as illustrated in Fig. \ref{fig_light_cone}. In the regime $L\geq 4p$ the boundary conditions also become  irrelevant. Therefore, for a given $p$, a quantum computer with $L = 4p$ qubits captures the thermodynamic limit exactly and this is the setting we will later choose in our simulation \cite{footnote}.

Besides the energy, we are also interested in the scaling behaviour of the observables
\begin{align}
m_Z(h; p) = \braket{\psi(\pmb{\gamma}_\text{opt}, \pmb{\beta}_\text{opt}) | Z_j |\psi(\pmb{\gamma}_\text{opt}, \pmb{\beta}_\text{opt}) },\\
m_X(h; p) = \braket{\psi(\pmb{\gamma}_\text{opt}, \pmb{\beta}_\text{opt}) | X_j |\psi(\pmb{\gamma}_\text{opt}, \pmb{\beta}_\text{opt}) },\\
m_{XX}(\ell;h; p) =\qquad\qquad\qquad\qquad\qquad\qquad\qquad \\
\braket{\psi(\pmb{\gamma}_\text{opt}, \pmb{\beta}_\text{opt}) | X_jX_{j+\ell} |\psi(\pmb{\gamma}_\text{opt}, \pmb{\beta}_\text{opt}) }-m_X(h;p)^2,
\end{align}
where the position $j$ is arbitrary due to translation invariance of both the initial state and the circuit.

\subsection{The exact solution of the quantum circuit \label{exact}}
General techniques suffer from a cost per optimization step that is exponential in the system size, limiting the availability of data to $L \lessapprox 20$. To study the fate of the algorithm closer towards the scaling limit, we turn to techniques from integrability. As we show in Appendix \ref{solution}, the energy $F$, magnetization $m$ and the overlap $\phi$ of $\ket{\psi(\pmb{\gamma},\pmb{\beta})}$ with the exact ground state of the target Hamiltonian admit closed form expressions. 

In order to present these results, let us introduce the sequence of functions $f_j(k)$ for $j=0,...,2p$ that satisfy the following recurrence relation
\begin{equation}\label{recf}
\begin{aligned}
f_0(k)&=0\\
f_{2j+1}(k)&=e^{-4i\gamma_{j+1}}\frac{1-i\tan(k/2)f_{2j}(k)}{f_{2j}(k)-i\tan(k/2)}\\
f_{2j}(k)&=e^{-4i\beta_{j}}\frac{1+i\tan(k/2)f_{2j-1}(k)}{f_{2j-1}(k)+i\tan(k/2)}\,.
\end{aligned}
\end{equation}
We then define
\begin{equation}\label{eps}
\begin{aligned}
f_{\rm proj}(k)&=\frac{iK_{h0}(k)+f_{2p}(k)}{1+iK_{h0}(k)f_{2p}(k)}\\
&\text{with }K_{h0}(k)=\tan \frac{1}{2}\Big[\arctan \frac{h-\cos k}{\sin k}\\
&\qquad\qquad\qquad\qquad+\arctan \frac{1}{\tan k} \Big]\\
\varepsilon_h(k)&=\begin{cases}2\sqrt{1+h^2-2h\cos k}\,,\,\,\, k\neq 0\\ -2(1-h)\,,\,\,\, k= 0\end{cases}\,,
\end{aligned}
\end{equation}
as well as the sets
\begin{equation}\label{nsr}
\begin{aligned}
   {\rm NS}&=\left\{\frac{2\pi(n+1/2)}{L},\, n=-L/2,...,L/2-1\right\}\\
{\rm R}&=\left\{\frac{2\pi n}{L},\, n=-L/2,...,L/2-1\right\}\,,
\end{aligned}
\end{equation}
and ${\rm NS}_+,{\rm R}_+\subset {\rm NS,R}$ the subsets of strictly positive elements.  
Then the energy density $F_L(h;\pmb{\gamma},\pmb{\beta})$ of the state $|\psi(\pmb{\gamma},\pmb{\beta})\rangle$ is
\begin{equation}\label{energy}
\begin{aligned}
F_L(h;\pmb{\gamma},\pmb{\beta})=&\frac{1}{2L}\sum_{k\in{\rm NS}\cup {\rm R}}\varepsilon_h(k)\frac{|f_{\rm proj}(k)|^2}{1+|f_{\rm proj}(k)|^2}\\
&-\frac{1}{4L}\sum_{k\in{\rm NS}\cup {\rm R}}\varepsilon_h(k)\,.
\end{aligned}
\end{equation}
The overlap $\phi(h;\pmb{\gamma},\pmb{\beta})$ is
\begin{equation}\label{overlap}
\begin{aligned}
\phi(h;\pmb{\gamma},\pmb{\beta})=&e^{-iL(\sum_{j=1}^p\beta_j+\gamma_j)}\frac{\phi^{\rm NS}(h;\pmb{\gamma},\pmb{\beta})}{\sqrt{2}}\\
\phi^{\rm NS}(h;\pmb{\gamma},\pmb{\beta})=&\prod_{j=0}^{p-1} \prod_{k\in{\rm NS}_+}\left[\sin \frac{k}{2}+i\cos \frac{k}{2} f_{2j}(k) \right]\\
&\qquad\qquad\times\left[\sin \frac{k}{2}-i\cos \frac{k}{2} f_{2j+1}(k) \right]\\
&\times\prod_{k\in{\rm NS}_+} \frac{1+iK_{h0}(k)f_{2p}(k)}{\sqrt{1+K^2_{h0}(k)}}\,.
\end{aligned}
\end{equation}
The expressions for the energy \eqref{energy} and overlap \eqref{overlap} hold in arbitrary finite size $L$.

In the thermodynamic limit $L\to\infty$, the $X$-magnetization is given by
\begin{equation}\label{magX}
\begin{aligned}
&m_X(h; p)=1-\frac{1}{\pi}\int_{-\pi}^\pi\frac{|g_{2p}(k)|^2}{1+|g_{2p}(k)|^2}\D{k}\,,
\end{aligned}
\end{equation}
with
\begin{equation}
    g_{2p}(k)=\frac{1-i\tan(k/2)f_{2p}(k)}{f_{2p}(k)-i\tan(k/2)}\,.
\end{equation}
The connected $XX$-magnetization correlation is given by
\begin{equation}\label{xx}
\begin{aligned}
    m_{XX}&(\ell;h; p)=\\
    &\left|\frac{1}{\pi}\int_{-\pi}^\pi \frac{g_{2p}(k)e^{-ik\ell}}{1+|g_{2p}(k)|^2} \right|^2-\left|\frac{1}{\pi}\int_{-\pi}^\pi \frac{|g_{2p}(k)|^2e^{-ik\ell}}{1+|g_{2p}(k)|^2} \right|^2\,.
\end{aligned}
\end{equation}
The $Z$-magnetization is expressed as a Fredholm determinant
\begin{equation}\label{mag}
    m_Z(h; p)=\det({\rm Id}-J)\,.
\end{equation}
Here, $J(\lambda,\mu)$ is the function defined on $[0,\pi]\times [0,\pi]$
\begin{equation}
\begin{aligned}
J(\lambda,\mu)&=\frac{2}{\pi}\frac{\rho(\lambda)\sin \lambda }{f_{2p}(\lambda)}\frac{1}{\cos \lambda-\cos \mu}\\
&\times\left[\dashint_0^\pi\frac{f_{2p}(k)\sin k}{\cos \lambda-\cos k}\D{k}-\dashint_0^\pi\frac{f_{2p}(k)\sin k}{\cos \mu-\cos k}\D{k}\right]\,,
\end{aligned}
\end{equation}
where $\dashint$ denotes an integral in principal value, and with
\begin{equation}
\label{last_eq}
    \rho(k)=\frac{1}{2\pi}\frac{|f_{2p}(k)|^2}{1+|f_{2p}(k)|^2}\,.
\end{equation}
Finally, if the algorithm is initialized in $\ket{+ \dots +}$ with $X\ket{+}=\ket{+}$, then the same relations hold true with $f_0(k)=\frac{1}{i\tan(k/2)}$ in \eqref{recf}, and with \eqref{energy} multiplied by $2$ and summation only in ${\rm NS}$. Crucially, all of these quantities can be computed in time and memory that scale as polynomials in $p$. Equations~(\ref{recf})-(\ref{last_eq}) are the central result of this paper. 

A few comments are in order. First, as shown in Appendix \ref{gaussiangates}, one can generalize the exact solution to allow for rounds of $\exp(it\sum_{j=1}^L Y_j Y_{j+1})$-gates in the quantum circuit \eqref{eq_algo}. Second, in the case where the algorithm is initialized in $\ket{+\dots +}$, it was conjectured in \cite{hoetal} that the ground state in finite size $L$ can be exactly prepared with $p=L/2$ steps. In our formalism, this conjecture translates to the statement that, for $f_0(k)=\frac{1}{i\tan(k/2)}$, the system of $L/2$ equations $f_{\rm proj}(k)=0\,, k\in {\rm NS}_+$ for $L$ real unknowns $\gamma_1,\beta_1,...,\gamma_p,\beta_p$ admits at least one solution. Third, as explained in Appendix \ref{quench}, as a byproduct of our calculations we obtain the solution to the long-standing problem of the full time-evolution of the order parameter after a quantum quench in the Ising model \cite{SPS:04,RSMS09,CEF10,CEF1,foini,blass,SE12,IR11,RI11,EFreview,delfino,collura,GFE}. Fourth, while we focus on the ground state in the main text, it is possible to target excited states using a variant of~(\ref{overlap}). Finally, to find the optimal solution, we use a Broyden–Fletcher–Goldfarb–Shanno algorithm, supplying the gradient of~(\ref{energy}), for which an analytic formula is available.

\section{Results}
All of the following results are given for the effective thermodynamic limit $L = 4p$.
\subsection{Energy \label{sec_energy}}
The optimal parameters $\pmb{\gamma},\pmb{\beta}$ are fed into~(\ref{energy}) to obtain the energy of the lowest-energy state that can be prepared with the quantum computer after $p$ steps. The energy density is then compared to the known value in the thermodynamic limit,
\begin{equation}
   F_\infty(h)= -\frac{1}{2\pi}\int_0^\pi \varepsilon_h(k)\D{k}\,.
\end{equation}
The results are plotted in Fig.\ref{energyfig} for three different values of the magnetic field $h=1$, $h<1$ and $h>1$. We observe the leading behaviour
\begin{equation}\label{scaling}
    \begin{aligned}
    F_L(h;\pmb{\gamma},\pmb{\beta})-F_\infty(h)=\begin{cases}
    A e^{-\lambda p}\,\quad \text{if }h<1\\
    \frac{\pi}{12 p^2}\,\quad \text{if }h=1\\
    \frac{B}{p}\,\quad \text{if }h>1\,,
    \end{cases}
    \end{aligned}
\end{equation}
with $A,B,\lambda$ some $h$-dependent constants. The factor $\frac{\pi}{12}$ is observed with precision $\sim 10^{-5}$.

We recall that the energy density of the exact ground state in finite-size $L$ would converge to its value in the thermodynamic limit exponentially fast for $h\neq 1$, while at $h=1$ the energy difference would have the leading behaviour $-\frac{\pi v_F c}{6L^2}$ with $v_F=2$ the Fermi velocity and $c=\frac{1}{2}$ the central charge of the Conformal Field Theory describing the critical point \cite{cardy,affleck,saleur}. We remark that for $h\leq 1$ the behaviour \eqref{scaling} is qualitatively the same in terms of the circuit depth $p$. In particular, the behaviour at the critical point could suggest a general leading order correction of $\frac{\pi v_F c}{12p^2}$.

However, for $h>1$ we observe a slowdown of the algorithm that bears similarities with adiabatic slowdown near criticality \cite{vandam}: adiabatically evolving across a critical point typically requires time $\sim 1/L^2$ if $1/L$ is the smallest gap of the system. Since the quantum computer is initialized in the symmetry-broken phase of the Ising model and the circuit respects the symmetry, a slowdown when targeting $h>1$ is expected. Conversely, when we initialize the computation in the paramagnetic $\ket{+\dots+}$-state, the reverse scaling \eqref{scaling} is observed: the energy density converges exponentially fast for $h>1$, as $\frac{\pi}{12p^2}$ for $h=1$ and is proportional to $1/p$ for $h<1$.

\begin{figure}[t]
\begin{center}
 \includegraphics[scale=0.56]{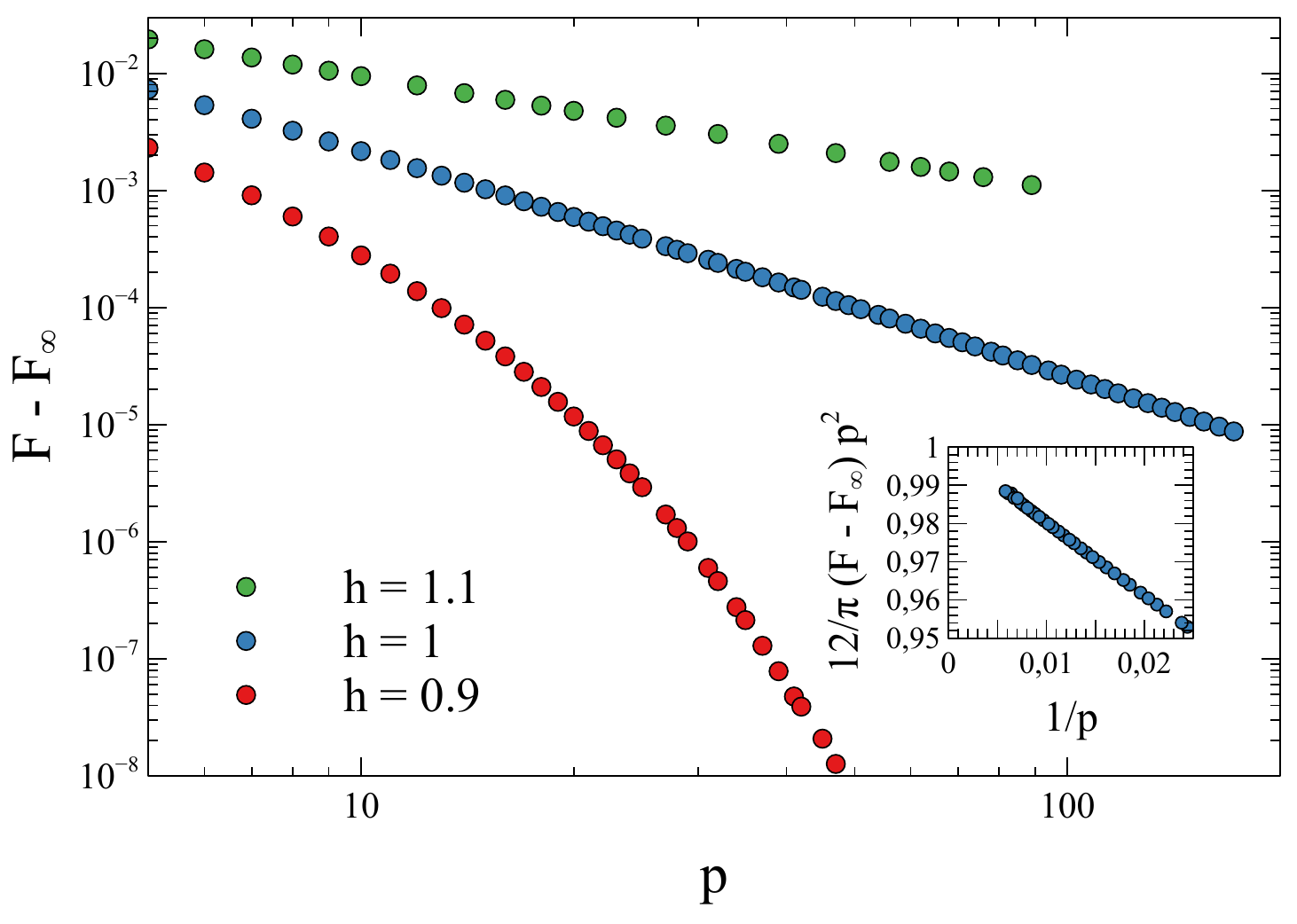}
\caption{Difference between the optimal energy density in the thermodynamic limit $F_L(h;\gamma,\beta)$ after $p$ steps and the exact ground state value, as a function of $p$, for $h=1$ (blue), $h=0.9$ (red) and $h=1.1$ (green). The inset shows the behaviour $\frac{\pi}{12p^2}$ for $h=1$.}
\label{energyfig}
\end{center}
\end{figure}

\subsection{$X$-magnetization}
We plot in Fig. \ref{fig_xmag} the expectation value $m_X(h;p)$ of the $X$-magnetization and its susceptibility $\chi_X(h;p)=\partial_h m_X(h;p)$ at optimal parameters $(\pmb{\gamma},\pmb{\beta})$ at circuit depth $p$, as a function of $h$. We observe a logarithmic divergence $\frac{2}{\pi} \log p$ of $\chi_X(h;p)$ at $h=1$, signaling a phase transition. The susceptibility is indeed known to diverge at the critical point with critical exponent $\alpha=0$ \cite{hamerbarber}.

We also show in Fig.~\ref{fig_xmag} the expectation value $m_{XX}(\ell;h;p)$ of the connected $XX$ magnetization correlation at the critical point $h=1$, as a function of the distance $\ell$. The exact known value is $\frac{4}{\pi^2} \frac{1}{4\ell^2-1}$ \cite{pfeuty}. With a circuit depth $p=188$, this exact value is well reproduced for $\ell\lessapprox 20$, and the critical exponent $2$ is observed up to $\ell \approx 30$. Our data suggests that the correlation length $\xi(p)$ behaves as $p/\pi^2$.

\begin{figure}[ht]
\begin{center}
 \includegraphics[scale=0.54]{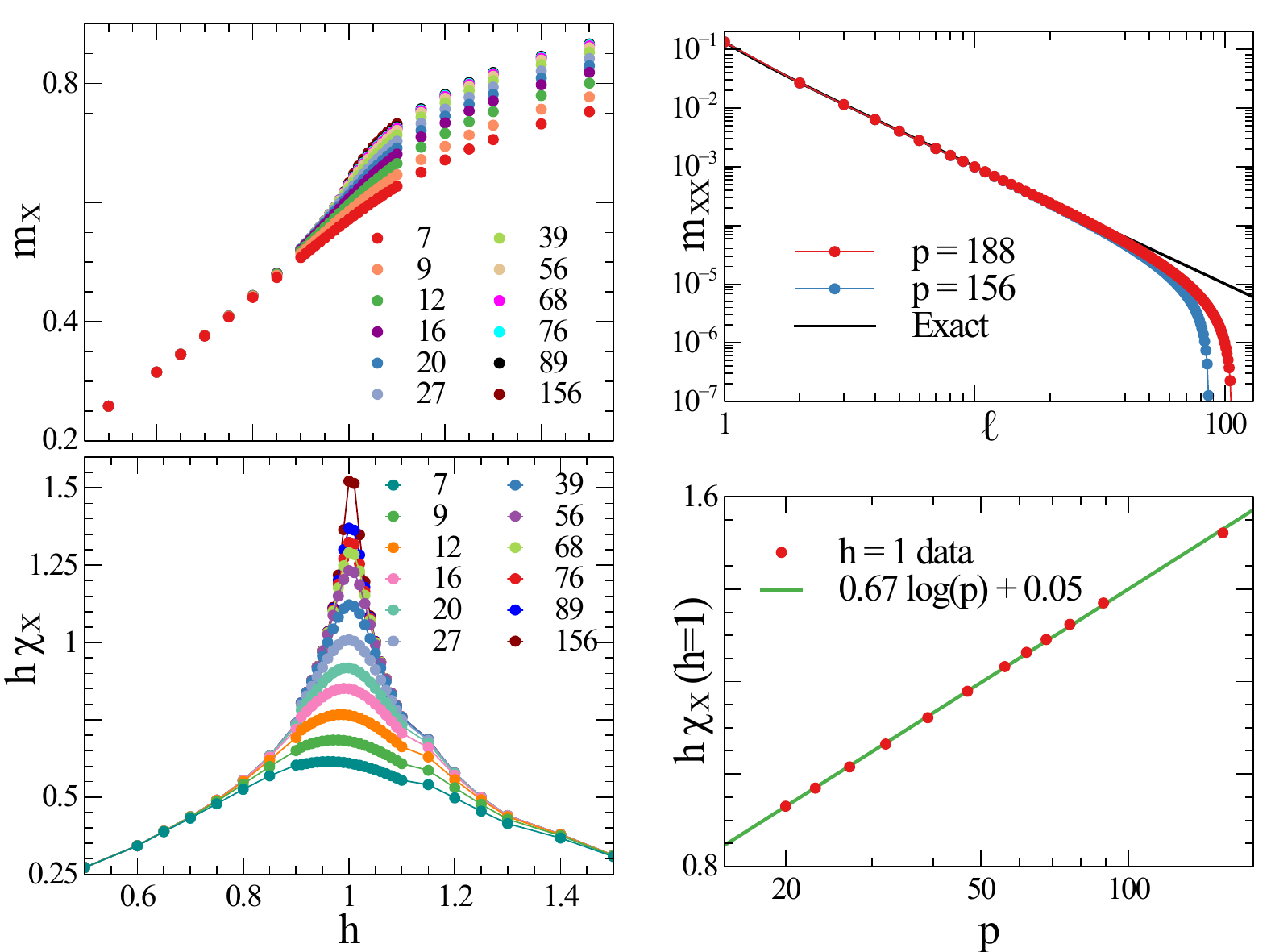}
\caption{Top left: magnetization in the $X$ direction at the optimal angles at circuit depth $p$ as a function of $h$. Top right: connected correlation of the magnetization in the $X$ as a function of $\ell$ for optimal angles at $h=1$. Bottom left panel: susceptibility in the $X$ direction at the optimal angles at circuit depth $p$ as a function of $h$. Bottom right: $\log$ divergence of the susceptibility at $h=1$. The lines guide the eye. }
\label{fig_xmag}
\end{center}
\end {figure}


\subsection{$Z$-magnetization and finite-circuit-depth scaling}

We now discuss the scaling of the order parameter, $m_Z(h;p)$. The exact value in the thermodynamic limit is known to be $(1-h^2)^{1/8}$ for $h\leq 1$ and $0$ for $h>1$ \cite{pfeuty}. The behavior of $m_Z(h;p)$ at optimal parameters $(\pmb{\gamma},\pmb{\beta})$, as shown in Fig. \ref{fig_zmag}, is qualitatively similar to the ground state magnetization in finite size $L$ due to  the finite light cone introducing a length scale $\leq 4p$. As we have observed in the preceding section, the correlation length $\xi(p)$ of the optimal states at circuit depth $p$ is significantly smaller. Nevertheless, since $\xi$ is linear in $p$, we expect the convergence in $p$ and $\xi$ to be characterised by the same exponents.

If the circuit reproduces the finite-size ground state sufficiently well, one may estimate the critical exponents by directly adapting finite-size arguments to obtain a scaling hypothesis for finite \emph{circuit depth} scaling: Denoting $\nu$ the critical exponent associated to the divergence of the correlation length $\xi \sim |h-h_c|^{-\nu}$ close to the critical point $h_c$, and $\beta$ the critical exponent of the $Z$ magnetization $m_Z(h;p)\sim |h-h_c|^\beta$, there exists a scaling function $\varphi$ such that for large $p$ and $h$ close to $h_c$
\begin{equation}
\label{eq_scaling_hypothesis}
    m_Z(h;p)p^{\beta/\nu}=\varphi((h-h_c)p^{1/\nu})\,.
\end{equation}

\begin{figure}[ht]
\begin{center}
 \includegraphics[scale=0.55]{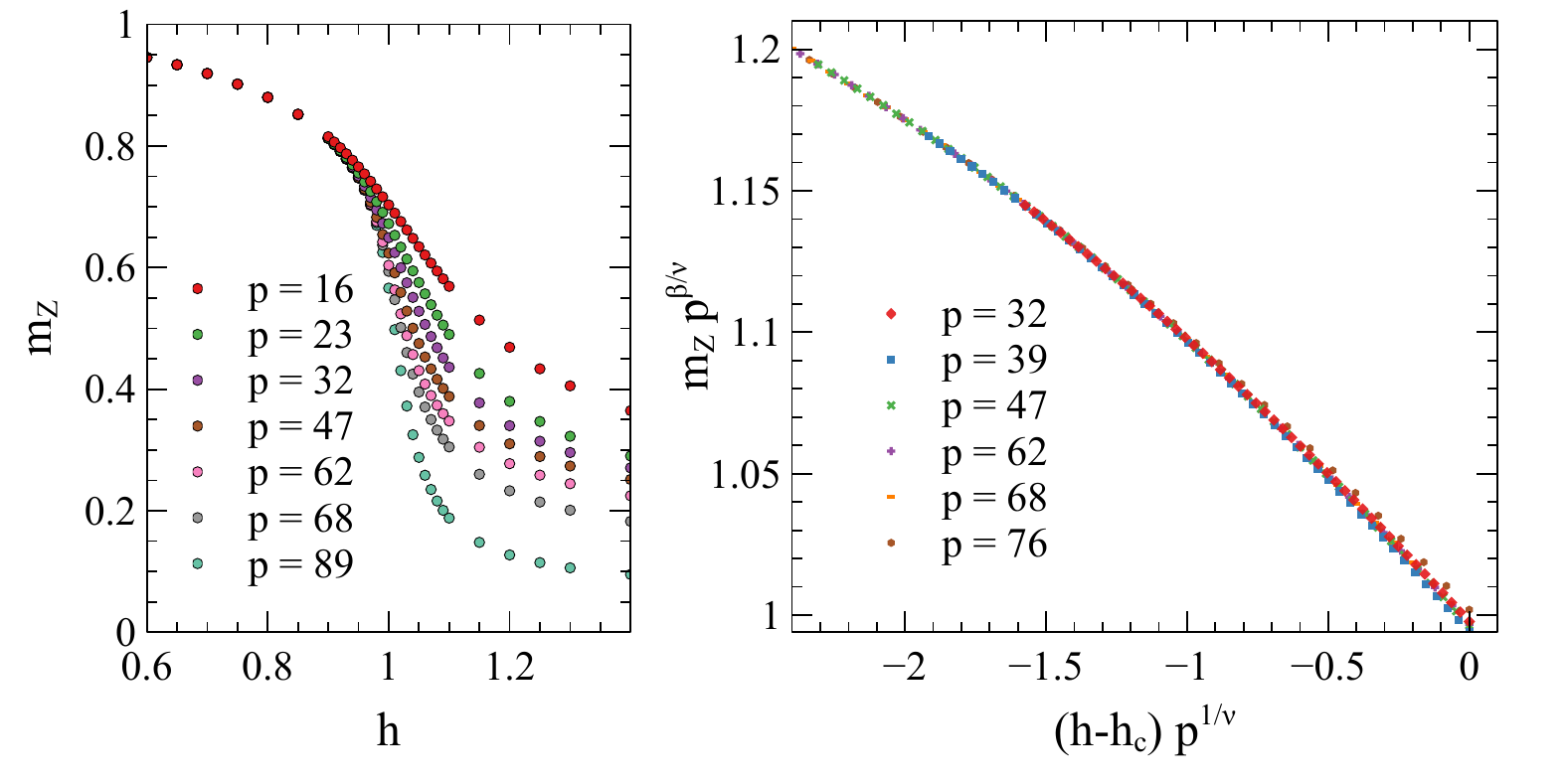}
\caption{Left: magnetization in the $Z$ direction at the optimal angles at circuit depth $p$ as a function of h. Right: one-sided collapse of the magnetization curves for 50 linearly spaced points between $h = 0.95$ and $h=1$, using the optimized exponents, see main text.}
\label{fig_zmag}
\end{center}
\end {figure}


There is however an important difference with finite-size scaling. We observed in Section \ref{sec_energy} that the convergence of the energy density as a function of $p$ is qualitatively similar to that in finite-size $L$ for $h<h_c$ and $h=h_c$ only. For $h>h_c$ the quantum circuit retains knowledge of the initial state, and the behaviour in terms of the circuit depth significantly differs. As a consequence one should generally expect finite-circuit-depth scaling to follow finite-size scaling only before the critical point. 
Having established $h_c=1$ from the $X$-susceptibility, optimizing the collapse yields $\beta = 0.122$ and $\nu=0.99$, in good agreement with the known values $\beta_\text{exact} = 0.125$, $\nu_\text{exact} = 1$, and such a one-sided collapse is indeed observed for those values see Fig. \ref{fig_zmag}.

\subsection{Structure of the optimal solutions}
\label{sec_structure_degeneracy}
\begin{figure}[t]
\begin{center}
 \includegraphics[scale=0.5]{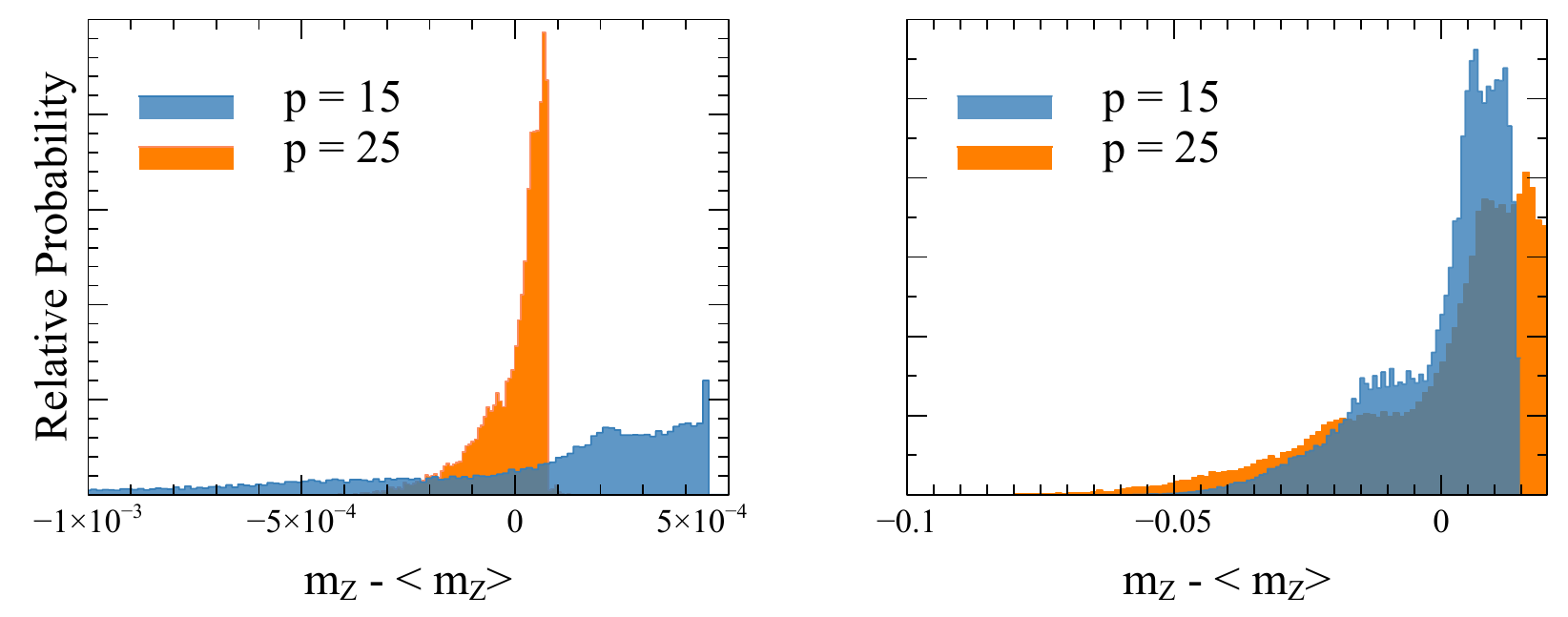}
\caption{Histogram of the $Z$-magnetization $m_Z$ of the optima obtained from 250'000 (80'000) random initializations of the optimization for $p=15$ ($p=25$) and targets $h=0.9$ (left) and $h=1.1$ (right). The angles are each initialised uniformly at random in $[0,\pi/2]$. The average over all samples is denoted by $\braket{m_Z}$. For $h=0.9$, $m_Z$ converges to the ground state magnetization, while, across the phase transition at $h=1.1$, such convergence is not observed for $p \leq 25$.}
\label{fig_histo}
\end{center}
\end{figure}

A key performance metric for variational quantum algorithms is the structure of the resulting energy landscape: an algorithm can only yield exponential speedup if the classical optimizer does not need to call the quantum subroutine too often and if local minima can be avoided. In this section we describe the structure of the solutions $(\pmb{\gamma},\pmb{\beta})$ that minimize the energy density in the thermodynamic limit at fixed $p$. We note that one can always add a multiple of $\pi/2$ to any of the angles without changing the state, so that one can choose they lie in $[0,\pi/2[$. We observe that for a circuit depth $p$, there are $2^{p}$ different sets of optimal angles $(\pmb{\gamma},\pmb{\beta})$ that yield exactly the same energy density in the thermodynamic limit $L\geq 4p$. This structure has been checked for $p\leq 5$, and we conjecture that it holds true for all $p$. One half of those minima are accounted for by the invariance of the energy under mapping $\gamma_j \rightarrow \pi/2 - \gamma_j$, $\beta_j \rightarrow \pi/2 - \beta_j$ for all $j$, as can be seen by direct inspection of~(\ref{energy}). Characterizing the remaining solutions is the goal of the remainder of this section.


Empirically, we find that the expectation values of the $X$-magnetization and the mode occupation numbers are identical in all of the ``branches". In particular the absolute value $|f_{\rm proj}(k)|$ is branch-independent. These observables share a common trait: after a Jordan-Wigner mapping from the spin to fermionic degrees of freedom (cf. Appendix~\ref{solution}), they correspond to quadratic fermion operators. Other observables such as the $Z$-magnetization or the $XX$ magnetization correlations vary among the different branches, cf. Fig. \ref{fig_histo}. In particular the phase of $f_{\rm proj}(k)$ depends on the branch. This branch dependence translates into the fact that at fixed finite $p$, the optimal energy for the TFIM Hamiltonian coupled to the observable of study would be discontinuous at vanishing coupling parameter.

For $h<1$, the distributions of expectation values become more and more peaked for $p \rightarrow \infty$ and converge to the same value among the different branches, reflecting the fact that all solutions converge to the same state. However, for $h>1$, our data suggests that this convergence could depend on the observable. While this convergence is observed to hold for the $XX$ magnetization correlation, our numerics suggest that, for $h>1$, the different branches could have different $Z$-magnetization even in the $p\to\infty$ limit. In fact, the behaviour of the determinant formula \eqref{mag} for the $Z$-magnetization (which is non-local in the fermions) in the limit $p\to\infty$ is rather subtle. For $h<1$ the function $f_{2p}(k)$ is optimized to approach a regular function of $k$, while for $h>1$ the target function is divergent at $k=0$, which makes \eqref{mag} approach a singular behaviour. 

We attribute this behaviour to the adiabatic slowdown discussed earlier. We could corroborate this hypothesis by initializing the algorithm in the $\ket{+\dots}$-state. However, the $Z$-magnetization of a state prepared from this initial state is exactly zero for symmetry reasons. Alternatively, one can initialize the algorithm in the state
\begin{equation}
    \frac{1}{\sqrt{2}} 
    \left(\ket{+ \dots +}+\frac{1}{\sqrt{L}}\sum_{j=1}^L \ket{+\dots+\underset{j}{-}+\dots+} \right) \,.
\end{equation}
Although this state may be difficult to prepare in practice, it allows for an exact treatment in our framework and yields a non-zero magnetization. In this case we observe indeed that the different branches for $h>1$ converge to the same magnetization in the limit $p\to\infty$.

To summarize this section, the energy landscape produced by the quantum circuit at depth $p$ is exponentially degenerate. The solutions corresponding to these minima have identical expectation values for observables that are quadratic in the fermions. Other observables differ between different optima. If the target Hamiltonian is in the same phase as the initial state, these expectation values converge to the ground state expectation value.

\subsection{Preparation time}
An attractive property of the VQCS algorithm is the fact that it is amenable to analog quantum simulation. In this case, the parameters $(\pmb{\gamma},\pmb{\beta})$ are interpreted as real times and the main criterion for the feasibility of the algorithm is the total time 
\begin{align}
\label{eq_Tofp}
T(p) = \sum_{j=1}^p \gamma_{\text{opt}}^j + \sum_{j=1}^p \beta_{\text{opt}}^j,
\end{align}
where all $\gamma_{\text{opt}}^j$ and $\beta_{\text{opt}}^j$ are assumed to be in $[0,\pi/2[$. If $T(p) \propto p$, then the same scaling relation~(\ref{eq_scaling_hypothesis}) hold, with $p$ replaced by $T$. Since the optimal solution is not unique (cf. Section \ref{sec_structure_degeneracy}), we take $T(p)$ to be the average total time over different ``branches". Numerically, we find that, on average, $\braket{\beta_{\text{opt}}^j}, \braket{\gamma_{\text{opt}}^j} \sim \pi/4$ such that $T(p) \sim \frac{\pi}{2} p$.
This is maximal in the sense that for every solution with $T(p) > \frac{\pi}{2} p$, there exists a solution with $T'(p) = \pi p - T(p) <  \frac{\pi}{2} p$, and so one can impose $T(p)\leq \frac{\pi}{2} p$.



\section{Discussion and Outlook}
We have shown that data obtained from small-scale quantum computers or simulators can be used in an informed way to accurately estimate the location and universality class of a quantum critical point. To this end, we have taken the perspective of quantum computation as a series of quenches and used techniques from integrability to obtain closed form expressions for the output of the VQCS algorithm. While these methods were necessary to argue about computations with more than $\sim 20$ qubits, they raise the question if similar scaling behaviour extends to non-integrable target Hamiltonians. We stress that the outcome of the quantum circuit is classically tractable independent of the target Hamiltonian, as long as the circuit is Gaussian. In this context, especially 2+1D quantum critical points would be interesting to investigate since a number of both digital and analog quantum devices naturally realize a two-dimensional architecture, while conformal data is hard to come by.

Regarding the slowdown of the algorithm across the phase transition, one may allow the circuit to break the symmetry of the target Hamiltonian, including gates like $\exp(i \delta Z)$. This may allow the computation to circumvent the critical point, potentially improving the scaling collapse~(\ref{eq_scaling_hypothesis}).

Naturally, circuits providing quantum advantage on NISQ devices will not be Gaussian and designing such circuits is a major challenge. The energy landscape may be sufficiently hostile that a random initialisation and optimization procedure is bound to fail; instead, usual contemporary approaches include only a small number of parameters relying on physical insight like the adiabatic theorem, coupled cluster wavefunctions, imaginary time evolution or dynamical mean-field theory \cite{vqe,vqaa, bauer_qalg, cerezo_variational, bharti_noisy, endo_hybrid, bassman_simulating,diniz_generalized,grimsley_adaptive, lee_generalized,mcardle_chemistry, colless_computation, higgott_excited, nakanishi_subspace, obrien_derivatives, mcardle_variational,motta_imaginary, manrique_periodic, rungger_meanfield, yoshioka_periodic, liu_periodic}. Adding a new, classically tractable ansatz to the toolkit may provide another useful starting point for quantum circuit design.

We also mention that, for large $p$, VQCS can mimic adiabatic evolution, yet the optimal evolution times we find at large $p$ do not generally bear the structure of trotterized adiabatic time evolution. Our techniques could be adapted to study the behaviour of quasi-adiabatic trajectories in the large-$p$ limit more generally.

Furthermore, it would be interesting to investigate the fate of this particular algorithm under noise. In the presence of an exponentially degenerate optimization landscape, small inaccuracies may even be beneficial by lifting degeneracies, while the linear depth of the algorithm circumvents an exponential slowdown due to noise-induced barren plateaus \cite{barren_plateaus_noise}. We leave these questions to future work.

\begin{acknowledgments}
We are grateful for discussions with F. H. L. Essler.
H.D. acknowledges support from the European Research
Council under the European Union Horizon 2020 Research and Innovation Programme via Grant Agreement
No. 804213-TMCS. E.G. acknowledges support from the EPSRC under grant EP/S020527/1.
\end{acknowledgments}
\begin{widetext}
\appendix
\section{The $X$ and $ZZ$ gates \label{solution}}
\subsection{The solution of the TFIM}
We first recall a number of results on the diagonalization of the TFIM, that can be found in Appendix A of \cite{CEF1} and that we briefly summarize here for self-completeness. The TFIM is defined as
\begin{equation}
H(h)=-\sum_{j=1}^L Z_jZ_{j+1}+hX_j\,,
\end{equation}
with periodic boundary conditions. It is diagonalized by performing a Jordan-Wigner transformation on the spin operators into fermions $\{c_j,c_l^\dagger\}=\delta_{jl}$
\begin{equation}\label{jw}
    X_j=1-2c_j^\dagger c_j\,,\qquad Z_j=(c_j+c_j^\dagger)\prod_{l=1}^{j-1}(1-2c_l^\dagger c_l)\,,
\end{equation}
followed by a Bogoliubov transformation of the Fourier modes
\begin{equation}\label{ck}
    c(k)=\frac{1}{\sqrt{L}}\sum_{j=1}^L e^{ijk}c_j\,,
\end{equation}
into fermionic operators $\{\alpha_{h;j},\alpha_{h;l}^\dagger\}=\delta_{jl}$
\begin{equation}\label{bog}
\begin{aligned}
c(k)&=\cos \frac{\theta^h_k}{2} \alpha_{h;k}+i \sin \frac{\theta^h_k}{2}\alpha^\dagger_{h;-k}\\
c^\dagger(k)&=-i\sin \frac{\theta^h_k}{2} \alpha_{h;-k}+\cos \frac{\theta^h_k}{2}\alpha^\dagger_{h;k}\,,
\end{aligned}
\end{equation}
where $\theta^h_k$ is defined by
\begin{equation}
e^{i\theta^h_k}=\frac{h-e^{ik}}{\sqrt{1+h^2-2h\cos k}}\,.
\end{equation}
The Hamiltonian conserves the parity of $\hat{N}=\sum_{j=1}^L c_j^\dagger c_j$, and so splits into two sectors, called Neveu-Schwartz (NS) and Ramond (R). If $\hat{N}$ is even the momentum $k$ in \eqref{ck} takes values in ${\rm NS}$ while if $\hat{N}$ is odd it takes values in ${\rm R}$, both sets being defined in \eqref{nsr}. Denoting $|0\rangle_h^{\rm R,NS}$ the vacuum state in these two sectors, the eigenstates of $H(h)$ then read, for $\pmb{k}\subset{\rm NS}$ or $\subset{\rm R}$
\begin{equation}
|\pmb{k}\rangle_h\equiv \alpha^\dagger_{h;k_{1}}...\alpha^\dagger_{h;k_{m}}|0\rangle^{\rm R,NS}_h\,,
\end{equation}
with $m$ even in the ${\rm NS}$ sector and odd in the ${\rm R}$ sector. The Hamiltonian $H(h)$ is then expressed as
\begin{equation}
    H(h)=\sum_{k\in{\rm NS}} \varepsilon_{h}(k) \alpha_{h;k}^\dagger \alpha_{h;k}+\sum_{k\in{\rm R}} \varepsilon_{h}(k) \alpha_{h;k}^\dagger \alpha_{h;k}-\frac{1}{2}\sum_{k\in {\rm NS,R}}\varepsilon_{h}(k)\,,
\end{equation}
with $\varepsilon_h(k)$ defined in \eqref{eps}, and where the last sum of this expression is performed on ${\rm NS}$ or ${\rm R}$ according to the sector of the states on which the Hamiltonian is applied. 

For $h>1$ the ground state is $|0\rangle^{\rm NS}_h$. For $h<1$ the two lowest-energy levels are $|0\rangle^{\rm NS}_h$ and $\alpha^\dagger_{h;0}|0\rangle^{\rm R}_h$ and their energy difference is exponentially small in the system size $L$. At $h=0$, the ground state is exactly twofold degenerate, and its corresponding eigenspace is generated by $|0\rangle ^{\otimes L}$ and $|1\rangle ^{\otimes L}$, with $|0\rangle$ and $|1\rangle$ the two eigenvectors of $Z$ with eigenvalues $1$ and $-1$. Their sum is in $\rm NS$ so is proportional to $|0\rangle^{\rm NS}_0$, and their difference is in $\rm R$  so is proportional to $\alpha^\dagger_{0;0}|0\rangle^{\rm R}_0$. Hence one can choose the arbitrary phase of the vacuum states so that the state $|\psi_1\rangle$ introduced below \eqref{hh} is
\begin{equation}
    |\psi_1\rangle=\frac{|0\rangle^{\rm NS}_0+\alpha^\dagger_{0;0}|0\rangle^{\rm R}_0}{\sqrt{2}}\,.
\end{equation}
As it will be useful, we will denote pair states in the NS sector by
\begin{equation}
|\pmb{\bar{k}}\rangle=|\pmb{k}\cup(-\pmb{k})\rangle\,,
\end{equation}
with $\pmb{k}\subset {\rm NS}_+$, and pair states in the R sector by
\begin{equation}
|\pmb{\bar{\bar{k}}}\rangle=|\pmb{k}\cup(-\pmb{k})\cup \{0\}\rangle\,,
\end{equation}
with $\pmb{k}\subset {\rm R}_+$.

\subsection{Coherent states}
Given a magnetic field $h$, a complex number $A$ and a function $f$ we define the so-called (fermionic) coherent states
\begin{equation}\label{coherent}
\begin{aligned}
    \Psi^{\rm NS}_h(A,f)&=A\sum_{\pmb{k}\subset {\rm NS}_+} \left[\prod_{k\in \pmb{k}}f(k)\right] |\pmb{\bar{k}}\rangle_h\\
    \Psi^{\rm R}_h(A,f)&=A\sum_{\pmb{k}\subset {\rm R}_+} \left[\prod_{k\in \pmb{k}}f(k)\right] |\pmb{\bar{\bar{k}}}\rangle_h\,.
    \end{aligned}
\end{equation}
The initial state $|\psi_1\rangle$ can be written as a sum of two coherent states in different sectors
\begin{equation}
    |\psi_1\rangle=\frac{\Psi^{\rm NS}_0(1,0)+\Psi^{\rm R}_0(1,0)}{\sqrt{2}}\,.
\end{equation}
The crucial observation of our paper that allows us to derive the exact formulas given in Section \ref{exact} is that a coherent state stays coherent if one changes the magnetic field $h$ to another magnetic field $\tilde{h}$. Namely we prove the following

\begin{property}\label{htildeh}
Let $h,\tilde{h}$ two arbitrary magnetic fields. We have\label{lem2}
\begin{equation}
\Psi^{\rm NS}_h(A,f)=\Psi^{\rm NS}_{\tilde{h}}(\tilde{A},\tilde{f})\,,
\end{equation}
with
\begin{equation}
\tilde{A}=A \prod_{k\in{\rm NS}_+}\frac{1+iK_{\tilde{h}h}(k)f(k)}{\sqrt{1+K^2_{\tilde{h}h}(k)}}
\end{equation}
and
\begin{equation}
\tilde{f}(k)=\frac{iK_{\tilde{h}h}(k)+f(k)}{1+iK_{\tilde{h}h}(k)f(k)}\,,
\end{equation}
where we defined
\begin{equation}
K_{\tilde{h}h}(k)=\tan \frac{\theta_k^{\tilde{h}}-\theta_k^{h}}{2}\,.
\end{equation}
We have an identical formula in the R sector.
\end{property}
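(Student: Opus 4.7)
\emph{Proof proposal.} The plan is to exploit the fact that the coherent-state sum in \eqref{coherent} factorises over disjoint pairs $(k,-k)$ of opposite momenta. Since for distinct $k\in {\rm NS}_+$ the operators $\alpha^\dagger_{h;k}\alpha^\dagger_{h;-k}$ mutually commute and square to zero, the sum over subsets $\pmb{k}\subset {\rm NS}_+$ can be resummed into a product
\begin{equation*}
\Psi^{\rm NS}_h(A,f)=A\prod_{k\in {\rm NS}_+}\bigl(1+f(k)\,\alpha^\dagger_{h;k}\alpha^\dagger_{h;-k}\bigr)|0\rangle^{\rm NS}_h\,,
\end{equation*}
and analogously for $\Psi^{\rm NS}_{\tilde h}(\tilde A,\tilde f)$. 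Both sides of the claim share this product structure, so it suffices to establish the identity pair by pair inside the two-dimensional $(k,-k)$ sector of even fermion parity, spanned by $|0\rangle^{\rm NS}_h$ and $|1\rangle^{\rm NS}_h:=\alpha^\dagger_{h;k}\alpha^\dagger_{h;-k}|0\rangle^{\rm NS}_h$.

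Next I would derive the $2\times 2$ change-of-basis matrix between $\{|0\rangle^{\rm NS}_h,|1\rangle^{\rm NS}_h\}$ and $\{|0\rangle^{\rm NS}_{\tilde h},|1\rangle^{\rm NS}_{\tilde h}\}$. Eliminating the Fourier mode $c(k)$ between the Bogoliubov relations \eqref{bog} at $h$ and at $\tilde h$ expresses the $\tilde h$-quasiparticles as a net Bogoliubov rotation of the $h$-quasiparticles by the angle $\Delta=(\theta^{\tilde h}_k-\theta^h_k)/2$,
\begin{equation*}
\alpha_{\tilde h;k}=\cos\Delta\,\alpha_{h;k}-i\sin\Delta\,\alpha^\dagger_{h;-k}\,.
\end{equation*}
Imposing $\alpha_{\tilde h;k}|0\rangle^{\rm NS}_{\tilde h}=0$ and requiring unit norm then expresses $|0\rangle^{\rm NS}_{\tilde h}$ as a specific linear combination of $|0\rangle^{\rm NS}_h$ and $|1\rangle^{\rm NS}_h$ with ratio $\pm iK$, where $K=K_{\tilde h h}(k)=\tan\Delta$, the sign depending on the ordering convention for $\alpha^\dagger_{h;k}\alpha^\dagger_{h;-k}$ used in the definition of $|\pmb{\bar k}\rangle$. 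A parallel short calculation produces $|1\rangle^{\rm NS}_{\tilde h}$ as the orthogonal partner, completing the unitary rotation between the two pair bases.

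The remaining step is purely algebraic: expanding $|0\rangle^{\rm NS}_h+f(k)|1\rangle^{\rm NS}_h$ in the $\tilde h$-basis using this rotation and rearranging brings it into the form $C\cdot\bigl(|0\rangle^{\rm NS}_{\tilde h}+\tilde f(k)|1\rangle^{\rm NS}_{\tilde h}\bigr)$. Matching coefficients yields the Möbius relation
\begin{equation*}
\tilde f(k)=\frac{iK_{\tilde h h}(k)+f(k)}{1+iK_{\tilde h h}(k)\,f(k)}\,,
\end{equation*}
exactly as stated, while the overall constant $C=(1+iK_{\tilde h h}(k)\,f(k))/\sqrt{1+K_{\tilde h h}^2(k)}$ is the per-pair contribution to the amplitude; taking the product over $k\in {\rm NS}_+$ then reproduces the claimed formula for $\tilde A$. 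The R-sector case follows from the same argument applied to pairs indexed by ${\rm R}_+$, with the unpaired zero mode at $k=0$ treated separately but in the same spirit.

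The main obstacle I anticipate is not conceptual but a bookkeeping one: the sign of $|1\rangle^{\rm NS}_h$, and hence of the Möbius parameter, depends on the ordering of the two creation operators in the definition of $|\pmb{\bar k}\rangle$, and the opposite choice would flip $K\mapsto -K$ in the final formula. Fixing this ordering convention once and propagating it consistently through the Bogoliubov identities, the vacuum overlaps, and the coherent-state expansion, so that the $+iK$ appearing in the lemma is reproduced and the convention remains compatible with the initial condition $f_0(k)=0$ for $|\psi_1\rangle$ and with the recurrence \eqref{recf}, is the one subtle point; everything else reduces to elementary $2\times 2$ linear algebra.
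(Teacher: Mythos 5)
Your proposal is correct, and it reaches the result by a genuinely different organization of the computation than the paper. The paper works globally: it imports the vacuum relation \eqref{vaccu} from the quench literature, expands each pair state $|\pmb{\bar r}\rangle_h$ in the $\tilde h$-eigenbasis, extracts the overlap formula ${}_{\tilde h}\langle\pmb{\bar q}|\pmb{\bar r}\rangle_h\propto\prod_{k\in\pmb{q}\perp\pmb{r}}iK_{\tilde h h}(k)$, and then performs a combinatorial resummation over all subsets $\pmb{r}\subset{\rm NS}_+$ at fixed $\pmb{q}$. You instead observe that the coherent state \eqref{coherent} is a tensor product over the even-parity two-dimensional subspaces of the $(k,-k)$ pairs, which the net Bogoliubov rotation by $\Delta=(\theta_k^{\tilde h}-\theta_k^h)/2$ preserves, and reduce everything to a $2\times2$ unitary $\frac{1}{\sqrt{1+K^2}}\bigl(\begin{smallmatrix}1&iK\\ iK&1\end{smallmatrix}\bigr)$ acting mode by mode; the Möbius law for $\tilde f$ and the per-mode factor $(1+iKf)/\sqrt{1+K^2}$ for $\tilde A$ then drop out of two lines of algebra. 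The two routes are mathematically equivalent — the paper's sum over $\pmb{r}$ is exactly the expansion of your product, and your derivation of the $\tilde h$-vacuum from $\alpha_{\tilde h;k}|0\rangle_{\tilde h}^{\rm NS}=0$ rederives \eqref{vaccu} rather than citing it — but yours is more elementary and makes the projective (Möbius) structure of the map $f\mapsto\tilde f$ manifest, whereas the paper's overlap-formula machinery is the one that carries over directly to the later form-factor resummations for the $Z$-magnetization. You correctly isolate the only real subtlety, namely the internal ordering convention of each pair $\alpha^\dagger_{h;k}\alpha^\dagger_{h;-k}$ (reordering distinct pairs costs no sign, reordering within a pair flips $K\mapsto-K$), which must be fixed consistently with \eqref{vaccu} and the recurrence \eqref{recf}; and your remark that the R sector needs separate (but trivial) treatment of the unpaired $k=0$ mode matches the level of detail the paper itself provides there.
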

\begin{proof}
We know the following relations between the vacuum states at $h$ and $\tilde{h}$, derived in \cite{CEF1}
\begin{equation}\label{vaccu}
|0\rangle_{h}^{\rm NS}=\prod_{k\in {\rm NS}_+}\left[\frac{1+iK_{\tilde{h}h}(k)\alpha_{\tilde{h};-k}^{\dagger}\alpha_{\tilde{h};k}^{\dagger}}{\sqrt{1+K^2_{\tilde{h}h}(k)}}\right]|0\rangle_{\tilde{h}}^{\rm NS}\,.
\end{equation}
as well as the relations between creation operators at different magnetic fields
\begin{equation}
\alpha_{h;k}= \cos \frac{\theta_k^{\tilde{h}}-\theta_k^{h}}{2}\alpha_{\tilde{h},k}+i\sin \frac{\theta_k^{\tilde{h}}-\theta_k^{h}}{2}\alpha^\dagger_{\tilde{h},-k}\,.
\end{equation}
This allows us to write for $\pmb{r}\subset {\rm NS}_+$
\begin{equation}
|\pmb{\bar{r}}\rangle_{h}=\frac{1}{\prod_{k\in {\rm NS}_+}\sqrt{1+K^2_{\tilde{h}h}(k)}}\prod_{r\in\pmb{r}}(iK_{\tilde{h}h}(r)+\alpha_{\tilde{h};-r}^{\dagger}\alpha_{\tilde{h};r}^{\dagger})\prod_{\substack{k\in{\rm NS}_+\\ k\notin\pmb{r}}}(1+iK_{\tilde{h}h}(k)\alpha_{\tilde{h};-k}^{\dagger}\alpha_{\tilde{h};k}^{\dagger})|0\rangle_{\tilde{h}}^{\rm NS}\,.
\end{equation}
We deduce from this the formula for the overlaps between two pair states at different magnetic fields
\begin{equation}
{}_{\tilde{h}}\langle \pmb{\bar{q}}|\pmb{\bar{r}}\rangle_{h}=\frac{\prod_{k\in \pmb{q}\perp \pmb{r}}iK_{\tilde{h}h}(k)}{\prod_{k\in {\rm NS}_+}\sqrt{1+K^2_{\tilde{h}h}(k)}}\,,
\end{equation}
where $ \pmb{q}\perp \pmb{r}=\pmb{q}\cup \pmb{r}-(\pmb{q}\cap \pmb{r})$. The overlap between $|\pmb{\bar{r}}\rangle_h$ and $|\pmb{q}\rangle_h$ is zero if $\pmb{q}$ is not a pair state.
We now obtain
\begin{equation}
\begin{aligned}
&\Psi^{\rm NS}_h(A,f)=\sum_{\pmb{q}\subset {\rm NS}}|\pmb{q}\rangle_{\tilde{h}}{}_{\tilde{h}}\langle \pmb{q}|\Psi^{\rm NS}_h(A,f)\\
&=A\sum_{\pmb{q}\subset {\rm NS}_+}\sum_{\pmb{r}\subset {\rm NS}_+}\left[\prod_{r\in \pmb{r}}f(r)\right]{}_{\tilde{h}}\langle \pmb{\bar{q}}|\pmb{\bar{r}}\rangle_h|\pmb{\bar{q}}\rangle_{\tilde{h}}\\
&=\frac{A}{\prod_{k\in {\rm NS}_+}\sqrt{1+K^2_{\tilde{h}h}(k)}}\sum_{\pmb{q}\subset {\rm NS}_+}\left[\prod_{q\in \pmb{q}}\left[iK_{\tilde{h}h}(q)\right]\sum_{\pmb{r}\subset {\rm NS}_+} \prod_{r\in \pmb{r}}\begin{cases}
\frac{f(r)}{iK_{\tilde{h}h}(r)} \qquad \text{if }r\in \pmb{q}\\
iK_{\tilde{h}h}(r)f(r) \qquad \text{if }r\notin \pmb{q}
\end{cases}\right] |\pmb{\bar{q}}\rangle_{\tilde{h}}\\
&=\frac{A}{\prod_{k\in {\rm NS}_+}\sqrt{1+K^2_{\tilde{h}h}(k)}}\sum_{\pmb{q}\subset {\rm NS}_+}\left[\prod_{q\in \pmb{q}}\left[iK_{\tilde{h}h}(q)\right] \prod_{q\in \pmb{q}}(1+\frac{f(q)}{iK_{\tilde{h}h}(q)}) \prod_{\substack{k\in {\rm NS}_+\\k\notin \pmb{q}}}(1+iK_{\tilde{h}h}(k)f(k))\right]|\pmb{\bar{q}}\rangle_{\tilde{h}}\\
&=\tilde{A}\sum_{\pmb{q}\subset{\rm NS}_+} \left[\prod_{q\in \pmb{q}}\tilde{f}(q)\right]|\pmb{\bar{q}}\rangle_{\tilde{h}}\,,
\end{aligned}
\end{equation}
with $\tilde{A},\tilde{f}$ defined in Lemma \ref{lem2}. 
\end{proof}

\subsection{The energy and the overlap}
We note that the time-evolution of a coherent state $\Psi^{\rm NS}_h(A,f)$ with the Hamiltonian $H(h)$ is directly given by
\begin{equation}
    e^{-itH(h)}\Psi^{\rm NS}_h(A,f)=\Psi^{\rm NS}_h(\tilde{A},\tilde{f})\,,
\end{equation}
with
\begin{equation}
    \tilde{f}(k)=e^{-2it\varepsilon_h(k)}f(k)\,,\qquad \tilde{A}=Ae^{-it \mathcal{E}^{\rm NS}}\,,
\end{equation}
and $\mathcal{E}^{\rm NS}$ the energy of the vacuum state $|0\rangle_h^{\rm NS}$. A similar relation holds in the R sector. Thus, applying $2p$ times Lemma \ref{htildeh} successively with $h=\infty$ and $h=0$, one finds
\begin{equation}
    |\psi(\gamma,\beta)\rangle=\frac{\Psi^{\rm NS}_0(A_{2p}^{\rm NS},f_{2p})+\Psi^{\rm R}_0(A_{2p}^{\rm R},f_{2p})}{\sqrt{2}}\,,
\end{equation}
with $f_{2p}$ given by recurrence \eqref{recf}, and $A_{2p}^{\rm NS,R}$ by
\begin{equation}
\begin{aligned}
A^{\rm NS,R}_{2p}=e^{-iL(\sum_{j=1}^p\beta_j+\gamma_j)}\prod_{j=0}^{2p-1}\prod_{k\in{\rm NS,R}_+}\left[\sin \frac{k}{2}+i(-1)^j \cos \frac{k}{2} f_j(k)\right]\,.
\end{aligned}
\end{equation}
In order to compute the expectation value of $H(h)$ of this state and the overlap with the ground state at magnetic field $h$, one performs another change of basis to $h$
\begin{equation}
     |\psi(\gamma,\beta)\rangle=\frac{\Psi^{\rm NS}_h(A_{\rm proj}^{\rm NS},f_{\rm proj})+\Psi^{\rm R}_h(A_{\rm proj}^{\rm R},f_{\rm proj})}{\sqrt{2}}\,,
\end{equation}
with $f_{\rm proj}(k)$ given in \eqref{eps}, and
\begin{equation}
    A_{\rm proj}^{\rm NS,R}=A_{2p}^{\rm NS,R}\prod_{k\in{\rm NS,R}_+}\frac{1+iK_{h0}(k)f_{2p}(k)}{\sqrt{1+K_{h0}(k)^2}}\,.
\end{equation}
Under this form, the formula for the projection onto the exact ground state given in \eqref{overlap} is readily deduced, as well as the energy \eqref{energy}. 

One notes that the overlap with the spontaneously symmetry broken ground state is given by
\begin{equation}
\begin{aligned}
\phi(h;\pmb{\gamma},\pmb{\beta})=&e^{-iL(\sum_{j=1}^p\beta_j+\gamma_j)}\frac{\phi^{\rm NS}(h;\pmb{\gamma},\pmb{\beta})+\phi^{\rm R}(h;\pmb{\gamma},\pmb{\beta})}{2}\,,
\end{aligned}
\end{equation}
with $\phi^{\rm R}$ defined similarly to $\phi^{\rm NS}$ with products in $\rm R_+$.

\subsection{$X$-magnetization}
The magnetization in the $X$ direction is the simplest to determine, since $X$ is local in terms of the fermions $c_j$. One first writes the state $|\psi(\gamma,\beta)\rangle$ as a sum of two coherent states at $h=\infty$. Using the Jordan-Wigner transformation \eqref{jw} and the Bogoliubov transformation \eqref{bog}, one has 
\begin{equation}
\begin{aligned}
\langle X_i\rangle=1-\frac{2}{L}\sum_{k\in{\rm NS}}\langle \alpha_{\infty;k}^\dagger \alpha_{\infty;k}\rangle\,,
\end{aligned}
\end{equation}
where $\langle .\rangle$ denotes an expectation value in the coherent state $\Psi_\infty^{\rm NS}(A,f)$. A similar formula holds in the $\rm R$ sector. Using then
\begin{equation}
    \langle \alpha_{0;k}^\dagger \alpha_{0;k}\rangle=\frac{|f(k)|^2}{1+|f(k)|^2}\,,
\end{equation}
one obtains formula \eqref{magX} in the thermodynamic limit. Formula \eqref{xx} for the connected correlation of the magnetization in the $X$ direction is obtained in a similar way.

\subsection{$Z$-magnetization}
The magnetization in the $Z$ direction is more delicate, since $Z$ is non-local in terms of the fermions. First, one notes that since the operator $Z$ changes the sector, one has
\begin{equation}\label{edtgy}
\begin{aligned}
    m_Z(h;p)&=\Re [\Psi^{\rm R}_0(A_{2p}^{\rm R},f_{2p})^\dagger Z_j\Psi^{\rm NS}_0(A_{2p}^{\rm NS},f_{2p})]\\
    &=\Re\left[ (A^{\rm R}_{2p})^*\sum_{\pmb{q}\subset {\rm R}_+}\prod_{q\in\pmb{q}}[f_{2p}(q)^*]F(\pmb{q})\right]\,,
\end{aligned}
\end{equation}
with
\begin{equation}
F(\pmb{q})={}^{\rm R}_0\langle \pmb{\bar{\bar{q}}}|Z_j \Psi^{{\rm NS}}_{0}(A_{2p}^{\rm NS},f_{2p})\,.
\end{equation}
Expressing $\Psi^{{\rm NS}}_{0}(A_{2p}^{\rm NS},f_{2p})$ in terms of energy eigenstates, one obtains a sum over the full Hilbert space of matrix elements of the $Z_j$ operator between two eigenstates, also called form factors. Their explicit expression is known and take a particularly simple form at $h=0$  \cite{Bugrij,BL03,Gehlen,iorgov11,CEF1}
\begin{equation}
{}_0^{\rm R}\langle \pmb{q}\cup\{0\}|Z_\ell|\pmb{k}\rangle_0^{\rm NS}=e^{i\ell(\sum_{q\in\pmb{q}}q-\sum_{k\in\pmb{k}}k)}\frac{(-i)^N}{L^{N}}\frac{\prod_{j<j'}\sin \frac{q_j-q_{j'}}{2}\prod_{j<j'}\sin \frac{k_j-k_{j'}}{2}}{\prod_{j,j'}\sin\frac{q_j-k_{j'}}{2}}\,,
\end{equation}
with $N$ the number of elements of $\pmb{q}$ and $\pmb{k}$. If this number differs in the two states then the form factor vanishes. In our case, the two states have to be pair states, and in this case one has the Cauchy determinant representation \cite{GFE}
\begin{equation}
\begin{aligned}
{}_{\rm R}\langle \pmb{\bar{\bar{q}}}|Z_\ell|\pmb{\bar{k}}\rangle_{\rm NS}=\frac{(-4)^N}{L^{2N}}(\det C)^2\prod_{j=1}^N \sin q_j\sin k_j\,,
\end{aligned}
\end{equation}
with $N$ the number of elements of $\pmb{q},\pmb{k}>0$, and with the matrix
\begin{equation}
C_{ij}=\frac{1}{\cos q_i-\cos k_j}\,.
\end{equation}
Let us fix a $\pmb{q}\subset {\rm R}_+$ with $N$ particles. One has then
\begin{equation}
F(\pmb{q})=\frac{A_{2p}^{\rm NS}}{N!}\frac{(-4)^N}{L^{2N}}\sum_{k_1,...,k_N\in {\rm NS}_+}(\det C)^2\prod_{j=1}^N [\sin q_j\sin k_jf_{2p}(k_j)] \,.
\end{equation}
We now prove the following Lemma \cite{GFE}, using techniques introduced in \cite{korepinslavnov}
\begin{property}\label{abab}
Given two functions $f(\lambda,\mu)$ and $g(\lambda,\mu)$, a set $K$ and two sets of numbers $\{\lambda_i\}_{i=1}^N,\{\mu_j\}_{j=1}^N$ we have the relation
\begin{equation}
\sum_{k_1,...,k_N\in K}\det_{i,j} \left[f(\lambda_i,k_j) \right]\det_{i,j} \left[g(k_i,\mu_j) \right]=N! \det_{i,j} \left[\sum_{k\in K} f(\lambda_i,k)g(k,\mu_j) \right]\,.
\end{equation}
\end{property}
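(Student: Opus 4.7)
The plan is to recognize the identity as the discrete analogue of the Andreief (or Cauchy--Binet) identity and to prove it by direct manipulation of the Leibniz expansions of the two determinants on the left-hand side. Concretely, I would write
\[
\det_{i,j}[f(\lambda_i, k_j)] = \sum_{\sigma \in S_N} \text{sgn}(\sigma) \prod_{i=1}^N f(\lambda_i, k_{\sigma(i)}), \qquad
\det_{i,j}[g(k_i, \mu_j)] = \sum_{\tau \in S_N} \text{sgn}(\tau) \prod_{i=1}^N g(k_i, \mu_{\tau(i)}),
\]
substitute both expansions into the left-hand side, and interchange the $k_1,\dots,k_N \in K$ sum with the two permutation sums.

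Next I would relabel the dummy index in the first product via $j = \sigma(i)$, turning the factor $f(\lambda_i, k_{\sigma(i)})$ into $f(\lambda_{\sigma^{-1}(j)}, k_j)$. Both products are then indexed by the same $j$, and for each $j$ the only factors depending on $k_j$ are $f(\lambda_{\sigma^{-1}(j)}, k_j)\, g(k_j, \mu_{\tau(j)})$. The $N$-fold sum over $K^N$ therefore factorizes into a product of one-dimensional sums:
\[
\sum_{k_1,\dots,k_N \in K} \prod_{j=1}^N f(\lambda_{\sigma^{-1}(j)}, k_j)\, g(k_j, \mu_{\tau(j)}) = \prod_{j=1}^N M_{\sigma^{-1}(j),\, \tau(j)},
\]
where $M_{ij} := \sum_{k \in K} f(\lambda_i, k)\, g(k, \mu_j)$ is exactly the matrix whose determinant appears on the right-hand side.

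Finally, setting $\pi := \tau \circ \sigma$, I would use $\text{sgn}(\sigma)\,\text{sgn}(\tau) = \text{sgn}(\pi)$ and, after renaming $j = \sigma(i)$ in the product, rewrite $\prod_j M_{\sigma^{-1}(j), \tau(j)} = \prod_i M_{i, \pi(i)}$. The resulting double sum over $(\sigma, \pi)$ then decouples: the sum over $\sigma$ contributes $|S_N| = N!$ trivially, while the sum over $\pi$ reproduces $\det M$, giving the claimed identity. The only mild obstacle is purely combinatorial bookkeeping, namely tracking the permutations and verifying that the signs combine correctly after re-indexing. There is no analytic subtlety: once the $k$-sum factorizes, the identity holds term by term, and in fact it could equivalently be cited directly as Andreief's identity specialized to counting measure on $K$.
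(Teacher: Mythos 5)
Your proof is correct, and it is the standard Andreief-identity argument; the paper's proof reaches the same endpoint by a slightly different organization of the same Leibniz-expansion idea. The paper first multiplies the two determinants using $\det A\,\det B=\det(AB)$, obtaining $\det_{i,j}\big[\sum_{s=1}^{N}f(\lambda_i,k_s)g(k_s,\mu_j)\big]$, then expands this and must invoke an antisymmetry cancellation (terms with a repeated internal index $s_i=s_j$ cancel under $\sigma\mapsto\sigma\cdot(ij)$) to reduce the internal sum to a sum over permutations $\tau$, after which the $k$-sum is performed and its $\tau$-independence yields the factor $N!$. You instead expand both determinants as permutation sums from the outset, so the sum over $K^{N}$ factorizes immediately into a product of one-dimensional sums with no cancellation argument needed, and the $N!$ arises from reparametrizing $(\sigma,\tau)\mapsto(\sigma,\pi)$ with $\pi=\tau\circ\sigma$. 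Your route is marginally cleaner in that it avoids the vanishing-of-coincident-indices step; the paper's route has the minor advantage of isolating the familiar $\det(AB)$ identity as its first move. Both are complete and elementary, and your closing remark that the statement is just Andreief's identity for counting measure on $K$ is accurate.
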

\begin{proof}
We write
\begin{equation}
    \begin{aligned}
    \det_{i,j} \left[f(\lambda_i,k_j) \right]\det_{i,j} \left[g(k_i,\mu_j) \right]&=\det_{i,j} \left[\sum_{s=1}^N f(\lambda_i,k_s)g(k_s,\mu_j) \right]\\
    &=\sum_{\sigma\in\mathfrak{S}_N}(-1)^\sigma C'_{1\sigma(1)}...C'_{N\sigma(N)}\,,
    \end{aligned}
\end{equation}
with $C'_{ij}=\sum_{s=1}^N f(\lambda_i,k_s)g(k_s,\mu_j)$. Then
\begin{equation}
    \det_{i,j} \left[f(\lambda_i,k_j) \right]\det_{i,j} \left[g(k_i,\mu_j) \right]=\sum_{s_1,...,s_N\in\{1,...,N\}}\sum_{\sigma\in\mathfrak{S}_N}(-1)^\sigma f(\lambda_1,k_{s_1})g(k_{s_1},\mu_{\sigma(1)})...f(\lambda_N,k_{s_N})g(k_{s_N},\mu_{\sigma(N)})\,.
\end{equation}
Now, if $s_i=s_j$, changing $\sigma$ into $\sigma \cdot(ij)$ exactly multiplies the summand by $-1$, which makes this contribution vanish. Hence
\begin{equation}
    \det_{i,j} \left[f(\lambda_i,k_j) \right]\det_{i,j} \left[g(k_i,\mu_j) \right]=\sum_{\tau \in\mathfrak{S}_N}\sum_{\sigma\in\mathfrak{S}_N}(-1)^\sigma f(\lambda_1,k_{\tau(1)})g(k_{\tau(1)},\mu_{\sigma(1)})...f(\lambda_N,k_{\tau(N)})g(k_{\tau(N)},\mu_{\sigma(N)})\,.
\end{equation}
One can now perform the sum over $k_1,...,k_N$. The result is independent of $\tau$, which gives the $N!$ in the Lemma and the determinant formula.

\end{proof}

Using Lemma \ref{abab}, we obtain
\begin{equation}
F(\pmb{q})=A_{2p}^{\rm NS}\det B\,,
\end{equation}
with
\begin{equation}
\begin{aligned}
B_{ij}=\frac{4}{L^2}\sum_{k\in {\rm NS}_+}\frac{\sin q_i f_{2p}(k) \sin k }{(\cos q_i-\cos k)(\cos q_j-\cos k)}\,.
\end{aligned}
\end{equation}
In our case, $f_{2p}(k)$ obtained from \eqref{recf} is a regular function of $k$. Hence in the thermodynamic limit $L\to\infty$ we obtain, see \cite{GFE}
\begin{equation}
B_{ij}=f_{2p}(q_i)\delta_{ij}-\frac{2\sin q_i}{\pi L(\cos q_i-\cos q_j)}\left[\dashint_0^\pi\frac{f_{2p}(k)\sin k}{\cos q_i-\cos k}\D{k}-\dashint_0^\pi\frac{f_{2p}(k)\sin k}{\cos q_j-\cos k}\D{k}\right]+\mathcal{O}(L^{-2})\,.
\end{equation}
If $i=j$ the second term is understood as being the derivative obtained when $q_i\to q_j$. 

Let us denote $\rho(q)$ the so-called root density of $\pmb{q}$ when $L\to\infty$, namely the function such that $\rho(q)dq L$ is the number of elements in $\pmb{q}$ between $q$ and $q+dq$. It follows that one has the Fredholm determinant formula
\begin{equation}\label{interm}
F(\pmb{q})=A_{2p}^{\rm NS}\prod_{q\in\pmb{q}}[f_{2p}(q)]\det [{\rm Id}-J[\rho]](1+\mathcal{O}(L^{-1}))\,,
\end{equation}
with
\begin{equation}
J[\rho](\lambda,\mu)=\frac{2}{\pi}\frac{\rho(\lambda)\sin \lambda }{f_{2p}(\lambda)}\frac{1}{\cos \lambda-\cos \mu}\int_0^\pi\left[\frac{f_{2p}(k)\sin k}{\cos \lambda-\cos k}-\frac{f_{2p}(k)\sin k}{\cos \mu-\cos k}\right]\D{k}\,.
\end{equation}
We now go back to \eqref{edtgy}, that is a weighted sum of the root-density-dependent quantity $\det[{\rm Id}-J[\rho]]$ over all the eigenstates. We have the following Lemma, using ideas of \cite{EC}
\begin{property}\label{qa}
Let $F[\pmb{q}]$ be a function of $\pmb{q}$, and $f(k)$ a function. We define
\begin{equation}
     \langle F\rangle=\frac{1}{\prod_{k\in{\rm NS}_+}[1+|f(k)|^2]}\sum_{\pmb{q}\subset{\rm NS}_+}F[\pmb{q}]\prod_{q\in\pmb{q}}[|f(q)|^2]\,.
\end{equation}
If $F[\pmb{q}]=F[\rho]$ depends only on the root density $\rho$ of $\pmb{q}$ in the thermodynamic limit, then
\begin{equation}
    \langle F\rangle=F[\rho_s]+o(L^{0})\,,
\end{equation}
with
\begin{equation}
    \rho_s(k)=\frac{1}{2\pi}\frac{|f(k)|^2}{1+|f(k)|^2}\,.
\end{equation}
\end{property}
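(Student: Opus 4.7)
The plan is to reinterpret the sum defining $\langle F\rangle$ as an expectation under a product Bernoulli measure on subsets $\pmb{q}\subset{\rm NS}_+$. Writing the weight $\prod_{q\in\pmb{q}}|f(q)|^2\big/\prod_{k\in{\rm NS}_+}[1+|f(k)|^2]$ as $\prod_{k\in\pmb{q}}p(k)\prod_{k\notin\pmb{q}}[1-p(k)]$ with $p(k)=|f(k)|^2/[1+|f(k)|^2]$, one sees that each momentum $k\in{\rm NS}_+$ is included in $\pmb{q}$ independently with probability $p(k)$, and $\langle F\rangle=\mathbb{E}_\mu[F[\pmb{q}]]$. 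Since ${\rm NS}_+$ has spacing $2\pi/L$ on $(0,\pi)$, the mean number of elements of $\pmb{q}$ in $[k,k+dk]$ is $(L/2\pi)p(k)\,dk$, producing precisely the announced $\rho_s(k)=p(k)/(2\pi)$.

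I would then establish concentration of the empirical root density around $\rho_s$. For any smooth test function $\phi$, the linear statistic $Y_\phi=\sum_{q\in\pmb{q}}\phi(q)$ is a sum of independent Bernoullis with
\begin{equation*}
\mathbb{E}_\mu[Y_\phi]=L\int_0^\pi\phi\,\rho_s\,dk+O(1),\qquad \mathrm{Var}_\mu(Y_\phi)=\sum_{k\in{\rm NS}_+}\phi(k)^2 p(k)[1-p(k)]=O(L).
\end{equation*}
Chebyshev then gives $\int\phi(\rho_{\pmb{q}}-\rho_s)\,dk=O(L^{-1/2})$ in $\mu$-probability, so $\rho_{\pmb{q}}\to\rho_s$ weakly with Gaussian-scale fluctuations.

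I read the hypothesis ``$F[\pmb{q}]=F[\rho]$ in the thermodynamic limit'' as the statement that $F[\pmb{q}_L]\to F[\rho]$ whenever the empirical densities $\rho_{\pmb{q}_L}$ converge weakly to $\rho$, with $F$ bounded (or at least uniformly integrable under $\mu$). The concentration above then yields $F[\rho_{\pmb{q}}]\to F[\rho_s]$ in $\mu$-probability, and dominated convergence delivers $\langle F\rangle=\mathbb{E}_\mu[F[\rho_{\pmb{q}}]]=F[\rho_s]+o(L^0)$.

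The main obstacle is making this continuity step quantitative in the intended application to \eqref{interm}, where $F[\rho]=\det({\rm Id}-J[\rho])$ is a Fredholm determinant. One must control $\|J[\rho_{\pmb{q}}]-J[\rho_s]\|_1$ in trace norm by $\|\rho_{\pmb{q}}-\rho_s\|$ in some norm \emph{stronger} than weak convergence yet compatible with the $O(L^{-1/2})$ concentration just obtained. The natural route is a Sobolev-type estimate on the kernel difference exploiting the smoothness of $f_{2p}$ and careful regularization of the principal-value singularities, combined with the standard bound $|\det({\rm Id}-A)-\det({\rm Id}-B)|\leq\|A-B\|_1\,e^{1+\|A\|_1+\|B\|_1}$ on Fredholm determinants.
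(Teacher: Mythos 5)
Your proposal is correct and reaches the same conclusion by a closely related but more explicitly probabilistic route. The paper proves concentration of the root density via generating functions: it introduces $\Gamma(\alpha)$ (and its multivariable analogue over a partition of $[0,\pi]$ into $m$ windows), computes it in closed form as a product over ${\rm NS}_+$, and deduces that all joint moments of the window occupation numbers factorize to leading order, i.e. $\langle r_1^{j_1}\cdots r_m^{j_m}\rangle=\prod_i(\int_{W_i}\rho_s)^{j_i}+\mathcal{O}(L^{-1})$; general functionals $F[\rho]$ are then handled by polynomial approximation in the $r_i$. Your observation that the normalized weights are exactly a product Bernoulli measure with $p(k)=|f(k)|^2/[1+|f(k)|^2]$ makes the independence underlying the paper's moment factorization manifest, and Chebyshev on linear statistics gives the same concentration with an explicit $O(L^{-1/2})$ rate that the paper does not state. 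What the paper's version buys is a concrete approximation scheme (polynomials in coarse-grained window counts) standing in for your more abstract ``weak continuity plus dominated convergence'' hypothesis; what yours buys is transparency and a quantitative fluctuation scale. Both arguments share the same soft spot, which you correctly identify and the paper glosses over: neither makes precise the class of functionals for which the continuity step is valid, and in the intended application $F[\rho]=\det({\rm Id}-J[\rho])$ one would indeed need a trace-norm estimate of the type you sketch to upgrade weak concentration of $\rho_{\pmb{q}}$ into convergence of the Fredholm determinant. Your flagging of that step as the genuine remaining work is accurate rather than a defect of your proof relative to the paper's.
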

\begin{proof}
Let us first treat the particular case where in the thermodynamic limit $F$ depends only on $r$ the number of elements of $\pmb{q}$ divided by $L$. We introduce the generating function
\begin{equation}\label{alpha}
    \Gamma(\alpha)=\frac{1}{\prod_{k\in{\rm NS}_+}[1+|f(k)|^2]}\sum_{\pmb{q}\subset{\rm NS}_+}\prod_{q\in\pmb{q}}[(1+\tfrac{\alpha}{L})|f(q)|^2]\,.
\end{equation}
By differentiating with respect to $\alpha$, we see that
\begin{equation}
    \langle r^j\rangle=\Gamma^{(j)}(0)+\mathcal{O}(L^{-1})\,.
\end{equation}
Besides, performing the summation on $\pmb{q}$ we obtain
\begin{equation}
    \Gamma(\alpha)=\prod_{k\in{\rm NS}_+}\left[1+\frac{\alpha}{L} \frac{|f(k)|^2}{1+|f(k)|^2} \right]\,.
\end{equation}
From this we find for any $j$
\begin{equation}
  \langle r^j\rangle=\left(\int_0^\pi \rho_s(k)\D{k} \right)^j+\mathcal{O}(L^{-1})\,.
\end{equation}
As any regular function on $[0,\pi]$ can be approximated by a polynomial with arbitrary precision provided its degree is high enough, this establishes the result of the Lemma when $F$ is a function of $r$ only.

Let us now divide $[0,\pi]$ into $m$ windows $W_k=[\frac{\pi}{m}(k-1),\frac{\pi}{m}k]$ for $k=1,...,m$, and consider $F[r_1,...,r_m]$ a function of $\pmb{q}$ that  in the thermodynamic limit depends only on $r_k$'s, the number of elements of $\pmb{q}$ in $W_k$ divided by $L$. By introducing $\Gamma(\alpha_1,...,\alpha_m)$ as in \eqref{alpha} with $\alpha$ replaced by $\alpha_k$ where $k$ is such that $q\in W_k$, we get similarly
\begin{equation}
  \langle r_1^{j_1}...r_m^{j_m}\rangle=\left(\int_{ W_1} \rho_s \right)^{j_1}...\left(\int_{ W_m} \rho_s \right)^{j_m}+\mathcal{O}(L^{-1})\,.
\end{equation}
Hence the Lemma holds whenever $F$ is a function of $r_1,...,r_m$ only. Since any regular functional of $\rho$ can be approximated with arbitrary precision by such a function $F$ provided $m$ is large enough, the Lemma holds for general $F[\rho]$.
\end{proof}
Using \eqref{interm} in \eqref{edtgy} and Lemma \ref{qa}, one obtains the formula \eqref{mag} for the $Z$-magnetization in the thermodynamic limit.

\section{Coherent gates \label{gaussiangates}}
In this appendix we show that $\exp(it\sum_{j=1}^L Y_j Y_{j+1})$-gates can be incorporated in the quantum circuit considered in this paper, while still staying exactly solvable.
\subsection{Change of Pauli matrices}
Given a set of Pauli matrices $X,Y,Z$, the operators defined by the rotation
\begin{equation}
    X'=-X\,,\qquad Y'=Z\,,\qquad Z'=Y\,,
\end{equation}
give another set of Pauli matrices. The Ising Hamiltonian becomes
\begin{equation}
    H(h)=-\sum_{j=1}^L Y_j' Y_{j+1}'-hX'_j\,.
\end{equation}
Performing a Jordan-Wigner transformation as in \eqref{jw} with fermions $c'_j$, one finds the relation
\begin{equation}
    c'_j=i(-1)^j c^\dagger_j\,,
\end{equation}
which results in
\begin{equation}
    c'(k)=ic(\pi-k)^\dagger\,.
\end{equation}
This results in the following relation on the corresponding $\alpha'_{h;k}$ in \eqref{bog}
\begin{equation}
    \alpha'_{-\infty;-k}=-\sign(k)\alpha_{\infty;\pi-k}\,.
\end{equation}
We deduce that the coherent states $\Psi^{\rm NS,R \prime}_{h}(A,f)$ built from these new fermions satisfy
\begin{equation}
    \Psi^{\rm NS}_{\infty}(A,f)=\Psi^{\rm NS\prime}_{-\infty}(A,\tilde{f})\,,
\end{equation}
with
\begin{equation} \tilde{f}(k)=f(\pi-k)\,.
\end{equation}
An identical transformation holds in the R sector. By using Lemma \ref{htildeh}, one can thus transform a coherent state at magnetic field $h$ in the original set of Pauli matrices, into another coherent state at magnetic field $\tilde{h}$ in the new set of Pauli matrices. In particular at $h=0$, this allows us to apply a $\exp(it\sum_{j=1}^L Y_j Y_{j+1})$-gate to the state of the model.

\subsection{Set of coherent gates}
We obtain that the state of the quantum computer is exactly tractable whenever it is initialized in a superposition of coherent states \eqref{coherent} and time-evolved with the gates $\exp(it\sum_{j=1}^L X_j)$, $\exp(it\sum_{j=1}^L Y_j Y_{j+1})$ or $\exp(it\sum_{j=1}^L Z_j Z_{j+1})$ in any order. We call this set of gates "coherent gates" since they all map a coherent state onto a coherent state.

To present the transformation rules, without loss of generality one can assume by linearity that it is initialized in a unique coherent state, and since Lemma \ref{htildeh} shows that all magnetic fields $h$ are equivalent, one can assume that this coherent state is prepared with $h=\infty$. Hence we assume it is initialized in
\begin{equation}
    |\psi\rangle=\Psi^{\rm NS}_\infty(A,f)\,,
\end{equation}
with $A$ a complex number and $f$ a function on $[0,\pi]$. Under the application of any coherent gates the state stays coherent but its parameters are changed
\begin{equation}
    e^{it \sum_{j=1}^L \Gamma_j}|\psi\rangle=\Psi^{\rm NS}_\infty(\tilde{A},\tilde{f})\,,
\end{equation}
with
\begin{equation}
    \begin{aligned}
     \tilde{f}(k)&=e^{-4it}f(k)\,,\qquad   \text{if }\Gamma_j=X_j\\
     \tilde{f}(k)&=\frac{-i\tan(k/2)(1-e^{-4it})+(1+\tan^2(k/2)e^{-4it})f(k)}{\tan^2(k/2)+e^{-4it}+i\tan(k/2)(1-e^{-4it})f(k)}\,,\qquad   \text{if }\Gamma_j=Y_jY_{j+1}\\
     \tilde{f}(k)&=\frac{i\tan(k/2)(1-e^{-4it})+(1+\tan^2(k/2)e^{-4it})f(k)}{\tan^2(k/2)+e^{-4it}-i\tan(k/2)(1-e^{-4it})f(k)}\,,\qquad   \text{if }\Gamma_j=Z_jZ_{j+1}\,.
    \end{aligned}
\end{equation}

\section{$Z$-magnetization after a quantum quench \label{quench}}
In this appendix we show how the techniques developed in this paper can be applied to quantum quenches in the Ising model \cite{SPS:04,RSMS09,CEF10,CEF1,foini,blass,SE12,IR11,RI11,EFreview,delfino,collura,GFE}. This problem consists in initializing the state of the system $|\psi(t)\rangle$ in the ground state of $H(h_0)$ at magnetic field $h_0$, and to time-evolve it at $t>0$ with the Hamiltonian $H(h)$ at another magnetic field $h$.  

Using \eqref{vaccu} derived in \cite{CEF1}, one finds that the time-evolved state is
\begin{equation}
  |\psi(t)\rangle=\frac{\Psi^{\rm NS}_h(A^{\rm NS},f_t)+\Psi^{\rm R}_h(A^{\rm R},f_t)}{\sqrt{2}}\,,
\end{equation}
with
\begin{equation}
    A^{\rm NS,R}=\frac{1}{\prod_{k\in{\rm NS,R}_+}\sqrt{1+K^2_{hh_0}(k)}}\,,\qquad f_0(k)=iK_{hh_0}(k)e^{-2it\varepsilon_h(k)}\,.
\end{equation}
The difficulty encountered before is that the form factors of $Z_j$ between eigenstates of $H(h)$ are rather complicated \cite{Bugrij,BL03,Gehlen,iorgov11,CEF1} and do not allow for a resummation of the one-point function of $Z_j$ when expressed as a spectral sum. The idea is thus to perform a change of basis to $h=0$ , with Lemma \ref{htildeh}
\begin{equation}
  |\psi(t)\rangle=\frac{\Psi^{\rm NS}_0(\tilde{A}^{\rm NS},\tilde{f}_t)+\Psi^{\rm R}_0(\tilde{A}^{\rm R},\tilde{f}_t)}{\sqrt{2}}\,,
\end{equation}
with
\begin{equation}
 \tilde{A}^{\rm NS,R}_t=A^{\rm NS,R} \prod_{k\in{\rm NS,R}_+}\frac{1+iK_{0h}(k)f_t(k)}{\sqrt{1+K^2_{0h}(k)}}\,,\qquad \tilde{f}_t(k)=\frac{iK_{0h}(k)+f_t(k)}{1+iK_{0h}(k)f_t(k)}\,.
\end{equation}
Then one obtains exactly formula \eqref{mag} for the $Z$-magnetization after the quench, with $f_{2p}$ replaced by $\tilde{f}_t$.

\end{widetext}

\end{document}